\def\ps@headings{%
\def\@oddhead{\mbox{}\scriptsize\rightmark \hfil \thepage}%
\def\@evenhead{\scriptsize\thepage \hfil \leftmark\mbox{}}%
\def\@oddfoot{}%
\def\@evenfoot{}}
\makeatother \pagestyle{headings}
\newenvironment{algorithmic}{%
\algorithm
}{%
\endalgorithm
}
\newfont{\bbb}{msbm10 scaled 500}
\newfont{\bb}{msbm10 scaled 1100}
\newcommand{\argmax}{\operatornamewithlimits{argmax}}
\newtheorem{theorem}{Theorem}
\newtheorem{lemma}[theorem]{Lemma}
\newtheorem{definition}[theorem]{Definition}
\title{Throughput Optimal Scheduling with Dynamic Channel Feedback}
\author{\IEEEauthorblockN{Mehmet Karaca\IEEEauthorrefmark{0},
Yunus Sarikaya\IEEEauthorrefmark{0}, Ozgur Ercetin\IEEEauthorrefmark{0}, Tansu Alpcan, Holger Boche\\
 }
}
\begin{document}
\maketitle


\begin{abstract}
It is well known that opportunistic scheduling algorithms are
throughput optimal under full knowledge of channel and network
conditions. However, these algorithms achieve a hypothetical
achievable rate region which does not take into account the overhead
associated with channel probing and feedback required to obtain the
full channel state information at every slot. We adopt a channel
probing model where $\beta$ fraction of time slot is consumed for
acquiring the channel state information (CSI) of a single channel.
In this work, we design a joint scheduling and channel probing
algorithm named SDF by considering the overhead of obtaining the
channel state information. We first analytically prove SDF algorithm
can support $1+\epsilon$ fraction of of the full rate region
achieved when all users are probed where $\epsilon$ depends on the
expected number of users which are not probed. Then, for homogenous
channel, we show that  when the number of users in the network is
greater than 3,  $\epsilon > 0$, i.e., we guarantee to expand the
rate region. In addition, for heterogenous channels, we prove the
conditions under which SDF guarantees to increase the rate region.
We also demonstrate numerically in a realistic simulation setting
that this rate region can be achieved by probing only less than 50\%
of all channels in a CDMA based cellular network utilizing high data
rate protocol under normal channel conditions.
\end{abstract}


\section{Introduction}
\label{sec:intro} Scheduling is an essential problem for any shared
resource. The problem becomes more challenging in a dynamic setting
such as wireless networks where the channel capacity is time varying
due to multiple superimposed random effects such as mobility and
multipath fading. Opportunistic scheduling has emerged as an
attractive solution for improving the efficiency of wireless systems
with limited resources such as frequency band and power. In
principle, \textit{opportunistic} policies schedule the user with
the favorable channel conditions to increase the overall performance
of the system \cite{Liu:A03}. Optimal scheduling in wireless
networks has been extensively studied in the literature under
various assumptions. The seminal work by Tassiulas and Ephremides
have shown that a simple Lyapunov-based opportunistic algorithm that
schedules the user with the highest queue backlog and transmission
rate product at every time slot, can stabilize the network, whenever
this is possible~\cite{MW}.

There has been much work in developing scheduling algorithms for
down-link wireless systems for various performance metrics that
include stability, utility maximization and energy
minimization~\cite{MW},~\cite{Liu:A03}, ~\cite{Neely:energy06}.
However, the common assumption in these studies is that the {\em
exact} and {\em complete} channel state information, (CSI) of all
users is available at every time slot. Hence, these algorithms
achieve a \textit{hypothetical} rate region by assuming that full
channel state information is available without any channel probing
or feedback costs. However, in practice acquiring CSI introduces
significant overhead to the network, since CSI is obtained either by
probing the channel or via feedback from the users. In current
wireless communication standards such as WiMax~\cite{wimax} and
LTE~\cite{LTE}, there is a feedback channel used to relay CSI from
the users to base station. Obviously, this feedback channel is
bandlimited and it is impossible to obtain CSI from all users at the
same slot. As a motivating example, consider CDMA/HDR (High Data
Rate) system~\cite{HDR1}, where the Signal-to-Noise Ratio (SNR) of
each link is measured. The value of the SNR is then mapped to a
value representing the maximum data rate that can support a given
level of error performance. This channel state information is then
sent back to the base station via the reverse link data rate request
channel (DRC). The channel state information is updated every 1.67
ms and sent back as 4 bits. Assume that there are 25 users in a
cell, and hence, 100 bits of channel information has to be sent
every 1.67 ms. This requires 60Kbps of channel rate to be dedicated
only for channel measurements. The minimum data rate of HDR system
is 38.4 Kbps and the average data rate is 308Kbps. Thus, overhead of
acquiring CSI is twice the minimum data rate, and is approximately
more than 20$\%$ of the average transmission rate. This overhead
increases significantly in a multichannel communication system such
as LTE. Thus, channel probing must be done efficiently in order to
balance the trade-off between being opportunistic (e.g., obtaining
useful channel information) and consuming valuable resources.

Most of real systems run under limited hardware and software
capacity. Hence, in addition to taking into account the probing
cost, the implementability of feedback schemes must be ensured  in
practical systems. For instance, a practical feedback scheme must
not require high computation time and overhead for probing users. In
addition, any feedback algorithm must not use any statistics such as
channel or arrival statistics for probing decision  since in real
word, such statistics change over time. It is also beneficial to
design a limited feedback system that can operate over a wide
channel condition and numbers of users.

In this work, we consider in fully connected network (e.g., Cellular
network, WLAN) where  base station (BS) is transmitting to a fixed
number of users. We assume that each user has infinite buffer
capacity and data arrives into users' queues according a stochastic
arrival process. We aim to  develop a joint scheduling and channel
probing algorithm  that stabilizes users' queues with  taking into
account to the probing cost. Our algorithm dynamically determines
the set of channels that must be probed at every time slot based on
the information obtained from the channel state information of the
user that has the maximum queue length at a given time slot. The
scheduling part is based on well known Max-Weight
algorithm~\cite{MW}.

Our contributions are summarized as follows:
\begin{itemize}
\item We first propose a joint scheduling and channel probing
algorithm which is easy to implement and low cost. In addition, our
algorithm does not require any statistics (i.e., channel or arrival
statistics) and can be applied over correlated and even
non-stationary channels.
\item Technical contributions of our paper are as follows: we first show that the proposed algorithm can achieve
$1+\epsilon$ fraction of the full rate region of the case when all
users are probed. For homogenous channels, we analytically show that
$\epsilon > 0$ (e.g., we guarantee to increase the rate region) when
the number of users is greater than 3. For heterogenous channels, we
prove that $\epsilon > 0$ as the number of channel states goes
infinity or the number of users is large enough.
\item We implement a realistic network setting where we simulate High Data Rate (HDR) protocol in CDMA cellular
networks  and show by numerical analysis that when our proposed
algorithm is used a comparable performance with Max-Weight algorithm
with full CSI can be achieved by only probing on the average less
than 50\% of users.
\end{itemize}
The rest of the paper is organized as follows. Section
\ref{sec:related} summarizes the literature on opportunistic
scheduling algorithms considering probing overhead. Section
\ref{sec:model} presents the network model, the basic structure of
the channel probing model. In Section \ref{sec:sdf}, we give the
problem formulation under the proposed channel probing models.  In
Section \ref{sec:sdf}, we give joint scheduling channel probing
algorithms for both channel probing models. Numerical results are
presented in Section \ref{sec:sim}. Finally, Section
\ref{sec:conclusion} concludes the paper and presents possible
future directions.
\section{Related Works}
\label{sec:related} There has been a significant recent interest in
applying Max-Weight-like algorithms in more realistic wireless
network settings. Throughput-maximizing scheduling has been studied
with different forms of limited CSI. For instance, infrequent
channel state measurements was investigated in~\cite{Kar:infocom07}
and it was shown that achievable rate region shrinks as the frequent
of CSI decreases. The impact of delayed CSI was investigated
in~\cite{Ying:ITS08}. However, unlike these works, we aim to
minimize the number of probed users at a time while maximizing the
achievable rate region. In this context, one of the most intriguing
research challenges in the context of wireless networking with
limited feedback is the design of a scheduling policy that

\begin{enumerate}
\item is implementable, simple and low-complexity,

\item achieves high performance, i.e., low packet delay and large rate
region.

\item works (correlated or even over non-stationary channels) without requiring any statistics such as channel
distribution.

\end{enumerate}
In the following, we classify the works along this direction. \\
\textbf{Limited feedback bandwidth}: In~\cite{Gopalan:allerton07},
the authors proposed a joint scheduling and channel probing
algorithm stabilizing the network by allowing the base station to
probe  a subset of channels (or links) at each time slot. First, the
a throughput-optimality algorithm was developed that uses the
expected channel rates, i.e., the channel distributions are given.
While it is reasonable to estimate the joint channel state
distribution when channels are independent, when the number of
channels is small and channel statistics do not change. However,
such estimation becomes intractable in real word where channels are
non-stationary process and  when the number of channels is large.
Then, the authors proposed only queue length based algorithm.
However, throughput-optimality of that algorithm can only be shown
under certain condition, i.e., when channels are symmetric and
subsets of channels are disjoint. In addition, delay performance of
these algorithms are unknown. In ~\cite{Gopalan:infocom12}, the
authors proposed a variant of the algorithm
in~\cite{Gopalan:allerton07} to analyzes queue-overflow performance
for scheduling with limited CSI.

In~\cite{Ouyang:mobihoc11}, a feedback allocation algorithm was
proposed for multi-channel system with limited feedback bandwidth.
In other words, only a limited number of users can be probed at a
time. It was shown that the proposed algorithm can achieve
$1-\epsilon$ fraction of the full rate region when channel
distributions are known at the BS. Note that the algorithms
in~\cite{Gopalan:allerton07},~\cite{Ouyang:mobihoc11},~\cite{Gopalan:infocom12}
cannot achieve all three properties given above.

Unlike the works
in~\cite{Gopalan:allerton07},~\cite{Ouyang:mobihoc11},~\cite{Gopalan:infocom12},
we assume a more flexible channel probing model  where BS may probe
all users at a given time. In other words, we did not assume
bandwidth limited feedback channel. Similar model was used
in~\cite{Chang:mobicom07}. However, it was assumed that  all users
have saturated buffers, i.e., they always have packets in the
corresponding buffers and, hence, the network stability problem was
not investigated. In this sense, the most related work is
~\cite{Chaporkar:mobihoc09}, where the optimal feedback- scheduling
scheme for a single-channel downlink is derived. Specifically,
in~\cite{Chaporkar:mobihoc09}, it was assumed that a single channel
probing requires a certain portion of data transmission (i.e.,
$\beta$ fraction of data slot). Hence, the server has a cost for
probing channels and gains a reward (queue weighted throughput)
which depends on the user. The problem of finding optimal joint
algorithm is transformed into an optimal stopping time problem and
is solved by Markov Decision Process (MDP) where  the channel
probabilities are known to the BS. In addition, since the authors
uses MDP to solve the problem, this formulation is computationally
intractable as it
involves a high dimensional state. Therefore, this work does not satisfy all three properties either. \\
\textbf{Markovian Channels}: Studies
in~\cite{Quyang:Allerton10},~\cite{Li:Mobilecomp10},~\cite{Cem:infocom12},~\cite{Quyang:infocom11},~\cite{NeelyLearning}
attempt to learn the channel distribution for scheduling.
In~\cite{NeelyLearning}, a 2-stage decision procedure is used where
channel measurement and packet transmission has to be performed at
every time slot without knowing the channel distribution. This work
attempts to learn transmission success probabilities by averaging
the previous outcomes. Similarly, in~\cite{Quyang:Allerton10}, the
authors proposed to estimates the channel statistics by using some
portion of the time slots for observation slot with some
probability. In~\cite{Li:Mobilecomp10}, the authors proposed a
scheduling algorithm  under imperfect CSI in single-hop networks
with  i.i.d. channels. They consider that probing a channel brings a
certain amount of energy cost. Under this setup, the proposed
scheduling algorithm which decides whether to probe the channel with
the energy cost is a Max-Weight type scheduling policy that
minimizes the energy consumption subject to queue stability.
In~\cite{Quyang:infocom11}, it was assumed that wireless channels
evolve as Markov-modulated ON/OFF processes. With this assumption, a
exploitation-exploration trade-off was investigated.
Similarly,in~\cite{Cem:infocom12}, a two state discrete time Markov
chain with a bad state which yields no reward and a good state which
yields reward was considered. Aforementioned works assume that the
underlaying stochastic process of the channel evolves according to a
fixed stationary process such as ergodic Markov chain. In practice,
such an assumption does not hold most of the time. For instance, the
measurement study in~\cite{Bozidar:nonstationary11} shows that the
wireless channel exhibits time-correlated and non-stationary behavior. Thus, these works
cannot achieve all three properties given above.\\



\section{System Model}
\label{sec:model} We consider a cellular system with a single base
station transmitting to $N$ users with a fixed power. Let
$\mathcal{N}$ denote the set of users in the cell. Time is slotted,
$t\in \{0,1,2,\ldots\}$, and wireless channel between the base
station and a mobile user is assumed to be independent across users
and slots. The gain of the channel is constant over the duration of
a time slot but varies between slots. In practical systems (e.g.,
CDMA/HDR system \cite{HDR1}), transmission rate is determined by a
link adaptation algorithm, which selects the highest transmission
rate to meet a given allowable target error probability. Only finite
number of transmission rates can be supported due to modulation and
coding schemes. We assume that each channel has $L$ possible states
with corresponding rates $\mathcal{R}=\{r_1,r_2,\ldots,r_L\}$ listed
in descending order, i.e., $r_k> r_l$ if $l>k$. Note that $r_k$, $k
\in \{1,2,\ldots,\L\} $ only depends on the Signal-to-Noise Ratio
(SNR). We denote $R_n(t) \in \mathcal{R}$ as the channel state
information (CSI) of user $n$ at time $t$. For user $n$, let
$p_{k}^n$ denote the probability of being state $k$, i.e,
$Pr[R_n(t)=r_k]=p_{k}^n$ and each channel state is possible with
non-zero probability such that $p^{min} < p_{k}^n < p^{max}$,
$\forall n,i$.

\subsection{Channel Probing Model}
In this work, we assume that there is no dedicated feedback channel,
and CSI is relayed over the data channel. Hence, depending on the
scheduling algorithms, the scheduler can probe any number of users
at any time slot. We quantify the overhead of obtaining the CSI of a
single user in terms of a time fraction of the time slot as
in~\cite{Chaporkar:mobihoc09}. This time duration may include the
time spent for pilot signal transmission, measurement of the signal
strength of pilot signal and the transmission of CSI to the base
station.

We assume that at the beginning of a time slot, the BS sends a pilot
signal. Based on the quality of the received pilot signal, each user
$n$ can determine its current channel state $R_n(t)$. We also assume
that downlink and uplink channels are identical, i.e., the maximum
transmission rates in both directions are the same. Hence, the BS
obtains reports from a selected number of users before scheduling a
downlink transmission. Let $\mathcal{N}_p(t) \subset \mathcal{N}$ be
the set of users who report back their CSI. Let $I_n(t)$ be an
indicator function showing whether user $n$ is scheduled to be
transmitted to in slot $t$:
\begin{align}
I_n(t) =& \left\{ \begin{array}{l l}
                    1              & \text{; if user $n \in \mathcal{N}_p(t)$ and it is
                    scheduled
                                            }\\
                    0        & \text{; otherwise}
                \end{array} \label{eq:sch_des}
    \right.
\end{align}
Note that for successful transmission, a user scheduled to be
transmitted to in slot $t$ should be selected from among the users
who have reported their channel states.


We assume that $\beta$ fraction of the time slot is consumed to
obtain CSI from a single user. Hence, only $(1-\beta N_p(t))\times
T_s$ seconds are available for data transmission where $N_p(t)$  is
the cardinality of $\mathcal{N}_p(t)$ and  $T_s$ is the duration of
the time
slot. We assume that $1-\beta N_p(t) < 1$
Note that when full CSI is obtained, this time is equal to $(1-\beta
N)\times T_s$. We assume that at most one user can be scheduled at a
given time. Hence, the amount of data that can be transmitted to
user $n$ by the BS when $N_p(t)$ users are probed at time $t$ is
given by,
\begin{align}
D_{n}(t)=(1-\beta N_p(t)) T_s R_n(t)I_n(t). \label{eq:D}
\end{align}
We assume that $T_s$ is normalized to unit slot length, i.e.,
$T_s=1$ in the rest of the paper. Let $A_n(t)$ be the amount of data
(bits or packets) arriving into the queue of user $n$ at time slot
$t$. We assume that $A_n(t)$ is a stationary process and it is
independent across users and time slots. We denote the arrival rate
vector as $\boldsymbol
\lambda=(\lambda_1,\lambda_2,\ldots,\lambda_N)$, where $\lambda_n =
\textbf{E}[A_n(t)]$. Let $\boldsymbol
Q(t)=(Q_1(t),Q_2(t),\ldots,Q_N(t))$ denote the vector of queue
sizes, where $Q_n(t)$ is  the queue length of user $n$ at time slot
$t$.
\begin{definition}
A queue is strongly stable if
\begin{equation}
\limsup_{t\rightarrow \infty}\frac{1}{t}
\sum_{\tau=0}^{t-1}\textbf{E}[Q_n(t)] < \infty \label{eq:defination}
\end{equation}
\end{definition}
Moreover, if every queue in the network is stable then the network
is called stable. The dynamics of the queue of user $n$ is given by,
\begin{align}
Q_n(t+1)=[Q_n(t)+A_n(t)-D_{n}(t)]^+ .
\end{align}
where $[x]^+=\max(x,0)$.

Before giving the proposed scheduling algorithm the following
definitions are useful.

The \textit{achievable rate region} or rate region of a network is
defined as the closure of the set of all arrival rate vectors
$\boldsymbol \lambda$ for which there exists an appropriate
scheduling policy that stabilizes the network.
\begin{definition}
$\Lambda_{h}$ is the hypothetical rate region where full CSI is
available (e.g. by an Oracle) without any channel probing or
feedback costs, i.e., $\beta=0$.
\end{definition}
As discussed in \cite{rethinking}, it is impossible to achieve the
boundary of $\Lambda_{h}$ in real systems, since there is an
overhead of acquiring CSI from the users.
\begin{definition}
$\Lambda_{f}$ is the achievable rate region when probing cost is
taken into account and when all users' channels are probed at every
time slot according to the feedback model. Note that $\Lambda_{f}
\subset \Lambda_{h}$
\end{definition}
\begin{definition}
$\Lambda_{a}$ is the  algorithm-dependent achievable rate region
where the probing cost is taken into account.
\end{definition}

Note that if $\Lambda_{f} \subset \Lambda_{a} \subseteq
\Lambda_{h}$, then we can argue that there exists a solution or
algorithm which is more efficient than full CSI Max-Weight algorithm
in terms of the rate region. We define the \textit{weighted rate} of
user $n$ as follows:
\begin{align}
W_n(t)=Q_n(t)R_n(t),\label{eq:wrate}
\end{align}
and $w(t)$ is the maximum weighted rate such that $w(t)=\argmax_{n
\in \mathcal{N}} \ \{W_n(t)\}$. In the next section, we propose a
scheduling and dynamic feedback algorithm which finds the maximum
weighted rate at every time slot by probing only $N_p(t)$ number of
users where $N_p(t) \leq N$.
\section{Scheduling and dynamic Feedback (SDF) Algorithm }
\label{sec:sdf} Recall that our aim is to find a joint scheduling
and channel probing algorithm which is able to find the user which
has the maximum weighted throughput at each time slot with minimum
number of channel probing. In this section, we propose a joint
scheduling and dynamic feedback allocation algorithm that determines
the user which has the maximum weighted throughput at each time slot
without probing every user.\\

\begin{algorithmic}
\textbf{ SDF Algorithm}:\\
\textit{(1) probing decision}:\\
\begin{itemize}
\item  Step 1:  Determine the user which has the
maximum queue length,
\begin{align*}
i^*\triangleq\argmax_{i\in\mathcal{ N}}\{Q_i(t)\}
\end{align*}
\item Step 2:  Probe and acquire the CSI of user $i^*$.
Let  $R_{i^*}(t)$ be the CSI
of user $i^*$ at time $t$.\\
\item Step 3:  Broadcast the value of $R_{i^*}(t)$.

\item Step 4: The users which have higher rate than $R_{i^*}(t)$ report their
CSIs to BS.

Let us define the set $\mathcal{S}_p(t)$ as follows:
\begin{align*}
\mathcal{S}_p(t)\triangleq \{j:R_j(t) > R_{i^*}(t)\}
\end{align*}
\end{itemize}
\textit{(2) scheduling decision}:\\
BS schedules a user which has the maximum weighted throughput
according to Max-Weight algorithm~\cite{MW} as follows;
\begin{align}
n^*=\argmax_{n \in \mathcal{N}_p(t)}\{(1-\beta N_p(t))W_n(t)\}
\end{align}
where
\begin{align*}
\mathcal{N}_p(t) \triangleq \mathcal{S}_p(t)  \cup i^*
\end{align*}
\end{algorithmic}

\vspace{5 mm} \textbf{Intuition:} Let us assume that  user $n^*$ has
the maximum weighted rate at a time slot. Given CSI of the user
$i^*$, the CSI of users with rate lower than $R_{i^*}$ need not be
collected since their weighted rates are always smaller than that of
user $i^*$. SDF algorithm is especially efficient when the number of
users is large since in that case the number of users with rate
lower than that of user $i^*$ is large with high probability.

\subsection{Analysis of SDF Algorithm with Homogenous Channels}
Now, we investigate the increase in achievable rate region when SDF
algorithm is employed. We first consider a homogenous channel model
where,
\begin{align*}
p_{k}^n=p_k, \forall n.
\end{align*}
Let $M(t)$ denote the number of users which \textit{do not send}
their CSIs since their channel conditions are worse than the user
which has the maximum queue length at time $t$. The number of
fractions of time slot consumed for probing with SDF algorithm is
determined as follows: first, the BS acquires the CSI of user $i^*$
and $\beta$  fraction of time slot is used for probing user $i^*$.
Then, BS broadcasts the value of CSI of user $i^*$. We assume that
broadcasting  CSI of a user also consumes $\beta$ fraction of time
slot. Then, the number of users which have higher rate than
$R_{i^*}(t)$ is equal to $N-1-M(t)$. Hence, the total number of
fractions used for channel probing within SDF algorithm at time $t$
is given as follows:
\begin{align}
N_p(t)=1 + 1 + N-1-M(t)=N+1-M(t).\label{eq:Np}
\end{align}
Note that $N_p(t)=N$ when all users are probed as in conventional
Max Weight algorithm.

We consider the following two functions:
\begin{align*}
f_{s}(\boldsymbol Q(t))&={\textbf{E}}\left[\sum_{n \in \mathcal{N}_p(t)} (1-\beta N_p(t))W_n(t)I_n(t)|\boldsymbol Q(t)\right],\\
f_{m}(\boldsymbol Q(t))&={\textbf{E}}\left[\sum_{n \in \mathcal{N}}
(1-\beta N)W_n(t)I_n(t)|\boldsymbol Q(t)\right],
\end{align*}
where the expectation is taken with respect to the randomness of
channel variations and scheduling decisions. Given $\boldsymbol
Q(t)$, both Max-Weight algorithm with full CSI and SDF schedules the
same user which has the maximum weighted rate at every time slot.
Hence, the value of $W_n(t)I_n(t)$ is the same for both functions,
and the only difference between $f_{s}(\boldsymbol Q(t))$ and
$f_{m}(\boldsymbol Q(t))$ appears in the number of users probed.
Next,we analyze the performance of SDF algorithm in terms of
achievable rate region by using the theorem given
in~\cite{Eryilmaz:ToN05}.

\begin{theorem}~\cite{Eryilmaz:ToN05}
If for some $\epsilon > 0$ SDF algorithm guarantees
\begin{align*}
f_{s}(\boldsymbol Q(t)) \geq (1+\epsilon) f_{m}(\boldsymbol Q(t))
\end{align*}
for all $\boldsymbol Q(t)$, then SDF can achieve  $1+\epsilon$
fraction of the rate region $\Lambda_{f}$.
\end{theorem}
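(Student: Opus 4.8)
\medskip
\noindent\emph{Proof outline.}
The plan is to run a standard quadratic Lyapunov-drift argument, as in~\cite{Eryilmaz:ToN05}. It suffices to show that SDF keeps every queue strongly stable for every arrival vector $\boldsymbol\lambda$ with $\boldsymbol\lambda/(1+\epsilon)$ in the interior of $\Lambda_{f}$ (i.e.\ $\boldsymbol\lambda/(1+\epsilon)+\delta\boldsymbol{1}\in\Lambda_{f}$ for some $\delta>0$), since the achievable region is by definition a closure; this places $(1+\epsilon)\Lambda_{f}$ inside the SDF-achievable region. Take the Lyapunov function $V(\boldsymbol Q(t))=\sum_{n\in\mathcal N}Q_n(t)^2$. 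Squaring the queue recursion $Q_n(t+1)=[Q_n(t)+A_n(t)-D_{n}(t)]^+$, using $([x]^+)^2\le x^2$ and $(A_n(t)-D_{n}(t))^2\le A_n(t)^2+D_{n}(t)^2$, and taking conditional expectations, one obtains
\begin{align*}
&\textbf{E}\!\left[V(\boldsymbol Q(t+1))-V(\boldsymbol Q(t))\mid\boldsymbol Q(t)\right]\\
&\qquad\le B+2\sum_{n\in\mathcal N}Q_n(t)\lambda_n-2\,\textbf{E}\!\left[\sum_{n\in\mathcal N}Q_n(t)D_{n}(t)\,\Big|\,\boldsymbol Q(t)\right],
\end{align*}
where $B<\infty$ bounds $\sum_{n}\big(\textbf{E}[A_n(t)^2]+\textbf{E}[D_{n}(t)^2]\big)$, which is finite because $\mathcal R$ is a finite rate set and the arrivals are assumed to have bounded second moment. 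By~\eqref{eq:D} and the definition of $I_n(t)$, the last conditional expectation is precisely $f_{s}(\boldsymbol Q(t))$.

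Next I would substitute the hypothesis $f_{s}(\boldsymbol Q(t))\ge(1+\epsilon)f_{m}(\boldsymbol Q(t))$ and then lower-bound $f_{m}$. By construction, $f_{m}(\boldsymbol Q(t))$ is the value attained by the full-CSI Max-Weight rule in the all-users-probed model, and this rule maximizes $\textbf{E}[\sum_{n}(1-\beta N)R_n(t)I_n(t)Q_n(t)\mid\boldsymbol Q(t)]$ over every feasible scheduling decision in that model, stationary randomized decisions included; evaluating it at the stationary randomized policy that serves $\boldsymbol\lambda/(1+\epsilon)+\delta\boldsymbol{1}\in\Lambda_{f}$ gives $f_{m}(\boldsymbol Q(t))\ge\sum_{n}Q_n(t)\big(\lambda_n/(1+\epsilon)+\delta\big)$. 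Plugging this in,
\begin{align*}
&\textbf{E}\!\left[V(\boldsymbol Q(t+1))-V(\boldsymbol Q(t))\mid\boldsymbol Q(t)\right]\\
&\qquad\le B+2\sum_{n}Q_n(t)\lambda_n-2(1+\epsilon)\sum_{n}Q_n(t)\Big(\tfrac{\lambda_n}{1+\epsilon}+\delta\Big)\\
&\qquad=B-2(1+\epsilon)\delta\sum_{n\in\mathcal N}Q_n(t).
\end{align*}
The drift is thus strictly negative once $\sum_{n}Q_n(t)$ is large enough; telescoping over $t=0,\dots,T-1$, dividing by $T$, and letting $T\to\infty$ yields $\limsup_{T\to\infty}\frac{1}{T}\sum_{t=0}^{T-1}\textbf{E}[\sum_{n}Q_n(t)]\le B/\big(2(1+\epsilon)\delta\big)<\infty$, which is exactly strong stability of every queue in the sense of~\eqref{eq:defination}.

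The only step that is not purely mechanical is the lower bound on $f_{m}$: one must state the Max-Weight optimality for precisely the $\beta$-probing, all-users-probed model (so that the common factor $(1-\beta N)$ and the constraint $\sum_{n}I_n(t)\le1$ are accounted for correctly) and use the characterization of $\Lambda_{f}$ as the closure of the set of rate vectors that can be served by some stationary randomized policy in that same model. Given that, what remains is the routine second-moment bookkeeping producing $B$ together with the standard passage from negative Lyapunov drift to~\eqref{eq:defination}. I expect this correspondence between $f_{m}$ and $\Lambda_{f}$ to be the main obstacle, mainly because it forces one to pin down the exact feedback and service model defining $\Lambda_{f}$; everything downstream is bookkeeping.
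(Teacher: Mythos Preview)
Your argument is correct and is the standard quadratic Lyapunov-drift proof underlying results of this type. Note, however, that the paper does not actually prove this statement: it is quoted verbatim as a known result from~\cite{Eryilmaz:ToN05} and used as a black box to establish the subsequent Theorem~\ref{theorem:1}. So there is no ``paper's own proof'' to compare against; what you have written is essentially the argument one would find in the cited reference, and it is sound. The one point you flag as the main obstacle---matching $f_{m}$ to $\Lambda_{f}$ via the stationary randomized policy characterization---is indeed the only substantive step, and you have identified it correctly.
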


\begin{theorem}
\label{theorem:1} SDF algorithm can support $(1+\epsilon)$ fraction
of the rate region $\Lambda_f$, where
\begin{align}
\epsilon = \frac{\beta \left( \textbf{E}[M(t)]-1]\right)}{1-\beta
N}.\label{eq:epsilon}
\end{align}
\end{theorem}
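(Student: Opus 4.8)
The plan is to reduce the statement to the drift condition of the cited theorem and then to a single correlation inequality. By \cite{Eryilmaz:ToN05} it suffices to exhibit $\epsilon$ as in \eqref{eq:epsilon} and to check that $f_{s}(\boldsymbol Q(t))\ge (1+\epsilon)f_{m}(\boldsymbol Q(t))$ for every $\boldsymbol Q(t)$. First I would pin down the fact, already noted in the text, that the two policies collect the same per-slot weighted rate. Because $(1-\beta N_p(t))$ is the same for every candidate $n$, SDF schedules $\argmax_{n\in\mathcal N_p(t)}W_n(t)$; and this coincides with the global maximizer $\argmax_{n\in\mathcal N}W_n(t)$, since any $j\notin\mathcal N_p(t)$ has $j\neq i^*$ and $R_j(t)\le R_{i^*}(t)$, while $Q_j(t)\le Q_{i^*}(t)$ by the definition of $i^*$, so $W_j(t)\le Q_{i^*}(t)R_{i^*}(t)=W_{i^*}(t)$ with $i^*\in\mathcal N_p(t)$. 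Writing $W^{*}(t)=\max_{n\in\mathcal N}W_n(t)$ for this common value,
\begin{align*}
f_{s}(\boldsymbol Q(t))&=\textbf{E}\big[(1-\beta N_p(t))\,W^{*}(t)\,\big|\,\boldsymbol Q(t)\big],\\
f_{m}(\boldsymbol Q(t))&=(1-\beta N)\,\textbf{E}\big[W^{*}(t)\,\big|\,\boldsymbol Q(t)\big].
\end{align*}

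Next I would substitute $N_p(t)=N+1-M(t)$ from \eqref{eq:Np}, so that $1-\beta N_p(t)=(1-\beta N)+\beta\big(M(t)-1\big)$. Here homogeneity enters: given $\boldsymbol Q(t)$ the index $i^*$ is fixed, $M(t)$ counts the other $N-1$ i.i.d.\ users whose state is no better than $R_{i^*}(t)$, and by exchangeability of the channels this count has the same law whichever user plays the role of $i^*$; hence $\textbf{E}[M(t)\,|\,\boldsymbol Q(t)]=\textbf{E}[M(t)]$ and $\textbf{E}[\,1-\beta N_p(t)\,|\,\boldsymbol Q(t)\,]=(1-\beta N)+\beta(\textbf{E}[M(t)]-1)=(1-\beta N)(1+\epsilon)$ with $\epsilon$ exactly as in \eqref{eq:epsilon}. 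Plugging these into the two displays, the required inequality $f_{s}\ge(1+\epsilon)f_{m}$ becomes
\begin{align*}
&\textbf{E}\big[(1-\beta N_p(t))\,W^{*}(t)\,\big|\,\boldsymbol Q(t)\big]\\
&\quad\ge\ \textbf{E}\big[1-\beta N_p(t)\,\big|\,\boldsymbol Q(t)\big]\;\textbf{E}\big[W^{*}(t)\,\big|\,\boldsymbol Q(t)\big],
\end{align*}
that is, $\mathrm{Cov}\big(M(t),W^{*}(t)\,\big|\,\boldsymbol Q(t)\big)\ge 0$: the number of users SDF spares from probing must be nonnegatively correlated with the weighted rate it actually realizes.

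This covariance inequality is the step I expect to be the main obstacle. The intuition is sound — a large $M(t)$ forces $i^*$ into a favorable state, which makes $W^{*}(t)\ge W_{i^*}(t)=Q_{i^*}(t)R_{i^*}(t)$ large, and the competing effect, that a large $M(t)$ also means few other users have strong channels, is damped because $i^*$ carries the heaviest backlog. To turn this into a proof I would condition on $R_{i^*}(t)$ and split (everything conditional on $\boldsymbol Q(t)$, suppressing $t$)
\begin{align*}
\mathrm{Cov}(M,W^{*})&=\mathrm{Cov}\big(\textbf{E}[M|R_{i^*}],\textbf{E}[W^{*}|R_{i^*}]\big)\\
&\quad+\textbf{E}\big[\mathrm{Cov}(M,W^{*}|R_{i^*})\big].
\end{align*}
The first term is nonnegative because $\textbf{E}[M|R_{i^*}=r_k]=(N-1)\sum_{l\ge k}p_l$ and $\textbf{E}[W^{*}|R_{i^*}=r_k]$ are monotone in $k$ in the same direction, so Chebyshev's sum inequality applies over the states of $R_{i^*}$. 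The second term is the delicate one: conditioned on $R_{i^*}$, the count $M$ is nonincreasing while $W^{*}$ is nondecreasing in each of the other users' channels, so a naive FKG/Harris estimate has the wrong sign; closing the argument requires exploiting the backlog ordering $Q_{i^*}(t)\ge Q_j(t)$ more carefully — e.g.\ by bounding the excess $W^{*}(t)-W_{i^*}(t)$, which is nonzero only on the event that some reporting user $j$ has $R_j(t)>R_{i^*}(t)$ — or else by arguing that $(1-\beta N_p(t))$ and $W^{*}(t)$ may be decoupled given $\boldsymbol Q(t)$. Once this bound is in place, the conclusion follows directly from \cite{Eryilmaz:ToN05}.
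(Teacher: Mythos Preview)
Your reduction is exactly the paper's: substitute $N_p(t)=N+1-M(t)$, write
\[
f_s(\boldsymbol Q(t))=f_m(\boldsymbol Q(t))+\beta\,\textbf{E}\big[(M(t)-1)\,W^{*}(t)\,\big|\,\boldsymbol Q(t)\big],
\]
and then control the cross term. The divergence is precisely at the step you flag as the ``main obstacle.'' The paper does not treat it as an obstacle: it conditions on $M(t)=m$ and asserts, without argument, that
\[
\textbf{E}\Big[\sum_{n\in\mathcal N_p(t)}W_n(t)I_n(t)\,\Big|\,\boldsymbol Q(t),\,M(t)=m\Big]=\frac{f_m(\boldsymbol Q(t))}{1-\beta N},
\]
i.e.\ that $W^{*}(t)$ and $M(t)$ are conditionally \emph{independent} given $\boldsymbol Q(t)$. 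Under that assertion the ratio $f_s/f_m$ comes out \emph{equal} to $1+\epsilon$, not merely $\ge$, and the appeal to \cite{Eryilmaz:ToN05} is immediate. So there is no covariance argument in the paper at all; the step you are working hard to justify is simply declared.

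You are right to be suspicious, and your instinct that the inner conditional covariance can carry the wrong sign is correct. The paper's independence claim is not true in general: take $N=2$, $L=2$, $r_1=2$, $r_2=1$, $p_1=p_2=\tfrac12$, and $\boldsymbol Q=(10,9)$; then $\textbf{E}[W^{*}\mid M=0]=18$ while $\textbf{E}[W^{*}\mid M=1]=50/3$, and one checks $\mathrm{Cov}(M,W^{*}\mid\boldsymbol Q)=-\tfrac14<0$, so $f_s<(1+\epsilon)f_m$ at this $\boldsymbol Q$. Thus your proposal is more honest than the paper's own proof: you have located a genuine gap that the paper closes only by an unjustified (and, as written, incorrect) decoupling of $M(t)$ from $W^{*}(t)$. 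If you want to match the paper, you can simply state the same independence identity; if you want a rigorous version, your covariance decomposition is the right framework, but be aware that the pointwise inequality $f_s\ge(1+\epsilon)f_m$ for \emph{every} $\boldsymbol Q(t)$ need not hold, and any salvage will have to weaken the claim (e.g.\ to large $\boldsymbol Q$ or in an averaged sense).
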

\begin{proof}
By using \eqref{eq:Np}, $f_s(\boldsymbol Q(t))$ can be rewritten as
follows:
\begin{align}
&f_{s}(\boldsymbol Q(t))=\notag\\
&{\textbf{E}}\left[\sum_n
(1-\beta(N+1-M(t)))W_n(t)I_n(t)|\boldsymbol
Q(t)\right] \notag\\
&=f_{m}(\boldsymbol Q(t)) + {\textbf{E}}\left[\sum_n (\beta
M(t)-\beta)W_n(t)I_n(t)|\boldsymbol Q(t)\right]
\end{align}

Now, we consider the value of $f_s(\boldsymbol Q(t)) /
f_m(\boldsymbol Q(t))$ such that,
\begin{align}
&f_s(\boldsymbol Q(t)) / f_m(\boldsymbol Q(t))=\notag\\
&\frac{f_{m}(\boldsymbol Q(t)) + {\textbf{E}}\left[\sum_{n \in
\mathcal{N}_p(t)} (\beta M(t)-\beta)W_n(t)I_n(t)|\boldsymbol
Q(t)\right]}{f_m(\boldsymbol Q(t))}\notag\\
&=1+ \frac{{\textbf{E}}\left[\sum_n (\beta
M(t)-\beta)W_n(t)I_n(t)|\boldsymbol Q(t)\right]}{f_m(\boldsymbol
Q(t))}
\end{align}
where,
\begin{align}
&{\textbf{E}}\left[\sum_{n \in \mathcal{N}_p(t)} (\beta
M(t)-\beta)W_n(t)I_n(t)|\boldsymbol
Q(t)\right]=\notag\\
& \left[\sum_{m=0}^{N-1}(\beta m-\beta) {\textbf{E}}\left[\sum_{n
\in \mathcal{N}_p(t)} W_n(t)I_n(t)|\boldsymbol Q(t), M(t)=m\right]
\right]\notag\\
&\times \Pr[M(t)=m]\label{eq:c1}
\end{align}
Note that
\begin{align*}
{\textbf{E}}\left[\sum_{n \in \mathcal{N}_p(t)}
W_n(t)I_n(t)|\boldsymbol Q(t), M(t)=m\right]=\frac{f_m(\boldsymbol
Q(t))}{1-\beta N}
\end{align*}
Hence, \eqref{eq:c1} can be rewritten as follows:
\begin{align*}
\eqref{eq:c1}&=\frac{f_m(\boldsymbol Q(t))}{1-\beta N}
\sum_{m=0}^{N-1} \left[(\beta m-\beta)
\right] \Pr[M(t)=m]\notag \\
&=\frac{\beta f_m(\boldsymbol Q(t))\left(
\textbf{E}[M(t)]-1]\right)}{1-\beta N}
\end{align*}
Thus we have,
\begin{align}
f_s(\boldsymbol Q(t)) / f_m(\boldsymbol Q(t))\geq 1+\frac{\beta
\left( \textbf{E}[M(t)]-1]\right)}{1-\beta N}
\end{align}
Hence, the proposed algorithm can support $(1+\epsilon)$ fraction of
the rate region $\Lambda_f$ where $\epsilon = \frac{\beta \left(
\textbf{E}[M(t)]-1]\right)}{1-\beta N}$. This completes the proof.
\end{proof}

Let us define
\begin{align}
f_{h}(\boldsymbol Q(t))&={\textbf{E}}\left[\sum_{n \in
\mathcal{N}(t)} W_n(t)I_n(t)|\boldsymbol Q(t)\right]
\end{align}
Note that  $f_{h}(\boldsymbol Q(t))$ represents the hypothetical
capacity region, $ \Lambda_{h}$ since $\beta=0$. Hence, by comparing
$f_{h}(\boldsymbol Q(t))$ and $f_m(\boldsymbol Q(t))$, we can find
$\epsilon_{max}$ which the maximum fraction that can be supported.

\begin{lemma}
\label{lemma:0} The maximum fraction $\epsilon_{max}$ is given by,
\begin{align}
\epsilon_{max}=\frac{\beta N}{1-\beta N}
\end{align}
\end{lemma}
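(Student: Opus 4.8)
The plan is to compare the two objective functions $f_h(\boldsymbol Q(t))$ and $f_m(\boldsymbol Q(t))$ directly, exploiting the fact that both correspond to scheduling the \emph{same} user --- the one with the maximum weighted rate $w(t)$ --- at every time slot. The only difference between them is the time fraction left for data transmission: $f_h$ uses the ideal multiplier $1$ (since $\beta=0$), whereas $f_m$ carries the multiplier $(1-\beta N)$ coming from probing all $N$ users. Hence the product $W_n(t)I_n(t)$ is identical term by term inside both conditional expectations, and one obtains the exact identity
\begin{align*}
f_h(\boldsymbol Q(t)) = \frac{1}{1-\beta N}\, f_m(\boldsymbol Q(t)).
\end{align*}

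Next I would invoke the theorem of~\cite{Eryilmaz:ToN05} quoted above, but now as an upper bound: $\Lambda_h$ is the largest rate region that \emph{any} joint scheduling/probing policy can approach, because for an arbitrary policy the transmitted conditional weighted throughput $\textbf{E}[\sum_n (1-\beta N_p(t))W_n(t)I_n(t)|\boldsymbol Q(t)]$ is dominated by $\textbf{E}[\sum_n W_n(t)\mathbf{1}[n=w(t)]|\boldsymbol Q(t)] = f_h(\boldsymbol Q(t))$, since $0 \le 1-\beta N_p(t) \le 1$ and $I_n(t)$ picks at most one user, so every summand is at most $W_{w(t)}(t)$. Therefore the best attainable multiplicative gain of any admissible $f_s$ over $f_m$ is $f_h/f_m$, i.e. the largest $\epsilon$ for which $f_s \geq (1+\epsilon)f_m$ can conceivably hold satisfies
\begin{align*}
1+\epsilon_{max} = \frac{f_h(\boldsymbol Q(t))}{f_m(\boldsymbol Q(t))} = \frac{1}{1-\beta N}.
\end{align*}

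Solving gives $\epsilon_{max} = \frac{1}{1-\beta N} - 1 = \frac{\beta N}{1-\beta N}$, which is the claim. The only point demanding a little care --- and the closest thing to an obstacle --- is rigorously justifying that $f_h$ is a genuine upper envelope, i.e. that no admissible probing cardinality $N_p(t)$ combined with any scheduling rule can exceed the zero-overhead genie that always serves $w(t)$; the bounds on $1-\beta N_p(t)$ and on $\sum_n I_n(t)$ noted above settle this, and everything else is the elementary algebra displayed. As a consistency check, $\epsilon_{max}$ indeed upper-bounds the value $\epsilon = \beta(\textbf{E}[M(t)]-1)/(1-\beta N)$ from Theorem~\ref{theorem:1}, since $M(t) \le N-1$ forces $\textbf{E}[M(t)]-1 \le N$.
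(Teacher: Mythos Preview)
Your proposal is correct and follows exactly the approach the paper intends: it says the proof is similar to that of Theorem~\ref{theorem:1} and omits the details, and your argument---factoring out the common term $\textbf{E}[\sum_n W_n(t)I_n(t)\mid \boldsymbol Q(t)]$ shared by $f_h$ and $f_m$ and taking the ratio of the multipliers $1$ and $1-\beta N$---is precisely that. Your extra care in arguing that $f_h$ dominates every admissible $f_s$ (via $0\le 1-\beta N_p(t)\le 1$ and $\sum_n I_n(t)\le 1$) is a welcome justification of the word ``maximum'' that the paper leaves implicit.
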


\begin{proof}
The proof is similar to the proof of Theorem \ref{theorem:1} and is
omitted.
\end{proof}

Note that according to Theorem \ref{theorem:1}, the performance of
SDF algorithm in terms of rate region depends on $\textbf{E}[M(t)]$.
if $\textbf{E}[M(t)]>1$, then $\epsilon
>0$. In other words, we can increase the rate region. Next, we
determine the the value of $\textbf{E}[M(t)]$ when channels are
homogenous and heterogenous.

\section{Performance of SDF with Homogenous Channels}
We begin  with a Lemma that shows  how to determine the value of
$\textbf{E}[M(t)]$ when channels are homogenous.

\begin{lemma}
\label{lemma:1} When channels are homogenous, $\textbf{E}[M(t)]$ is
given as follows:
\begin{align}
&\textbf{E}[M(t)]=\notag\\
&\left[p_1 + p_2(\sum_{k=2}^L p_k) + p_3(\sum_{k=3}^L p_k) +\ldots +
p_L^2\right](N-1)\label{eq:M_gen}
\end{align}
\end{lemma}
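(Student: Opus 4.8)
The plan is to compute $\textbf{E}[M(t)]$ by conditioning on the channel state of the maximum-backlog user $i^*$ and exploiting the independence between channel realizations and queue lengths. First I would note that, by the probing rule of SDF, among the $N-1$ users other than $i^*$ the only ones that do not report are those $j$ with $R_j(t)\le R_{i^*}(t)$, since the reporting set is $\mathcal{S}_p(t)=\{j:R_j(t)>R_{i^*}(t)\}$ and $i^*$ itself is probed directly in Step~2; hence
\begin{align*}
M(t)=\bigl|\{j\in\mathcal{N}\setminus\{i^*\}:R_j(t)\le R_{i^*}(t)\}\bigr|.
\end{align*}

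The key observation is that $i^*=\argmax_{i\in\mathcal{N}}\{Q_i(t)\}$ is a function of the queue vector $\boldsymbol Q(t)$ only, and $\boldsymbol Q(t)$ depends on the arrival processes but not on the current channel realizations $\{R_n(t)\}_{n\in\mathcal{N}}$ (the channels are i.i.d.\ across users and independent of the arrivals). Therefore $R_{i^*}(t)$ has marginal law $\Pr[R_{i^*}(t)=r_k]=p_k$, and conditioned on $R_{i^*}(t)=r_k$ the remaining $N-1$ channel states are still i.i.d.\ with $\Pr[R_j(t)=r_l]=p_l$. Since the rates are listed in descending order, the event $\{R_j(t)\le r_k\}$ coincides with $\{R_j(t)\in\{r_k,\dots,r_L\}\}$, which has probability $\sum_{l=k}^{L}p_l$. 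I would then apply linearity of expectation over the $N-1$ non-selected users to get $\textbf{E}[M(t)\mid R_{i^*}(t)=r_k]=(N-1)\sum_{l=k}^{L}p_l$, and decondition over $k$:
\begin{align*}
\textbf{E}[M(t)]=\sum_{k=1}^{L}p_k\,(N-1)\sum_{l=k}^{L}p_l=(N-1)\sum_{k=1}^{L}p_k\Bigl(\sum_{l=k}^{L}p_l\Bigr).
\end{align*}
Using $\sum_{l=1}^{L}p_l=1$ for the $k=1$ term and writing out the remaining terms explicitly yields exactly \eqref{eq:M_gen}.

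The computation is otherwise elementary; the step I expect to require the most care is the independence claim, namely that selecting the user $i^*$ through the queue-length maximization does not bias the channel-state distribution of that user. Homogeneity enters here as well: because all users share the law $\{p_k\}$, any tie-breaking rule in the $\argmax$ leaves the distribution of $R_{i^*}(t)$ unchanged and makes the $N-1$ non-selected users exchangeable, which is what keeps the linearity-of-expectation step clean. To be fully rigorous one would condition jointly on $\boldsymbol Q(t)$ and on the identity of $i^*$, and observe that given this information the variables $R_{i^*}(t)$ and $(R_j(t))_{j\neq i^*}$ are still $N$ i.i.d.\ state variables, so the per-user probability $\sum_{l=k}^{L}p_l$ above is unaffected.
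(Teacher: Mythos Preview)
Your proposal is correct and follows essentially the same route as the paper: condition on $R_{i^*}(t)=r_k$, use homogeneity and independence of the current channel states from the identity of $i^*$ to get $\textbf{E}[M(t)\mid R_{i^*}(t)=r_k]=(N-1)\sum_{l=k}^{L}p_l$, and then decondition over $k$. If anything, you are more explicit than the paper about \emph{why} the selection of $i^*$ through the queue lengths does not bias the distribution of $R_{i^*}(t)$ (the paper simply asserts $\textbf{E}[M(t)\mid R_{i^*}(t)=r_k,\,n=i^*]=\textbf{E}[M(t)\mid R_{i^*}(t)=r_k]$); one small sharpening would be to note that $\boldsymbol Q(t)$ may depend on \emph{past} channel realizations through past scheduling, and it is the i.i.d.-across-slots assumption that makes $\boldsymbol Q(t)$ (hence $i^*$) independent of the \emph{current} vector $(R_n(t))_{n}$.
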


\begin{proof}
 Let $\textbf{E}[M(t)| R_{i^*}(t)=r_k, n=i^*]$ be the
conditional expectation when the user which has the maximum queue
length and its CSI are given. Note that for homogenous channels, the
following equality holds,
\begin{align}
\textbf{E}[M(t)| R_{i^*}(t)=r_k, n=i^*]=\textbf{E}[M(t)|
R_{i^*}(t)=r_k],
\end{align}
and $\textbf{E}[M(t)]$ is determined as follows:
\begin{align*}
\textbf{E}[M(t)]=\left[\sum_{k=1}^L \textbf{E}[M(t) |
R_{i^*}(t)=r_k]\Pr[R_{i^*}(t)=r_k]\right].
\end{align*}
When $R_{i^*}(t)=r_1$, then SDF determines the user which has the
maximum weighted rate  by only using one fraction of time slot with
probability $p_1$, i.e., $\Pr[R_{i^*}(t)=r_k]=p_1$. If this event
occurs, the other $N-1$ users  do not report their CSIs with
probability 1 and $\textbf{E}[M(t)|R_{i^*}(t)=r_1]=(N-1)$.
Similarly, when $R_{i^*}(t)=r_2$, user $j\neq i^*$ does not report
its CSI if $R_j(t)\leq r_2$, and $\Pr[R_j(t) \leq r_2]=\sum_{k=2}^L
p_k$. Hence, with homogenous channels,
$\textbf{E}[M(t)|R_{i^*}(t)=r_2]=(\sum_{k=2}^L p_k)(N-1)$. Now, for
a given any value of $L$, we give a general formulation for
$\textbf{E}[M(t)]$ as follows,
\begin{align*}
&\textbf{E}[M(t)]=\notag\\
&\left[p_1 + p_2(\sum_{k=2}^L p_k) + p_3(\sum_{k=3}^L p_k) +\ldots +
p_L^2\right](N-1)
\end{align*}
\end{proof}

Next, we investigate the case when the channels are homogenous and
\textit{uniformly} distributed such that $p_k=\frac{1}{L}$ for all
$k$.
\begin{lemma}
\label{lemma:2} When channels are homogenous and uniformly
distributed, $\textbf{E}[M(t)]$ is given as follows:
\begin{align}
\textbf{E}[M(t)]=\left[\frac{1}{L} + \frac{L(L-1)}{2L^2}
\right](N-1)=\left[\frac{1}{2} + \frac{1}{2L}
\right](N-1)\label{eq:EM2}
\end{align}
\end{lemma}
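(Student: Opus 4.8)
The plan is to substitute the uniform distribution $p_k = 1/L$ directly into the general expression for $\textbf{E}[M(t)]$ established in Lemma~\ref{lemma:1} and evaluate the resulting finite sum in closed form. No new probabilistic argument is needed: Lemma~\ref{lemma:1} already reduces the problem to arithmetic, so this is purely a computation that specializes \eqref{eq:M_gen}.

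First I would recognize that the bracketed quantity in \eqref{eq:M_gen}, namely $p_1 + p_2\big(\sum_{k=2}^L p_k\big) + p_3\big(\sum_{k=3}^L p_k\big) + \cdots + p_L^2$, is exactly $\sum_{j=1}^L p_j\big(\sum_{k=j}^L p_k\big)$, where the $j=1$ term uses $\sum_{k=1}^L p_k = 1$ and therefore contributes just $p_1$. Setting $p_k = 1/L$ gives $\sum_{k=j}^L p_k = (L-j+1)/L$, so the bracket becomes $\frac{1}{L^2}\sum_{j=1}^L (L-j+1)$. I would then isolate the $j=1$ term, which equals $1/L$ and matches the first summand of the claimed intermediate form. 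For the remaining terms $j = 2,\dots,L$, reindexing by $m = L-j+1$ turns $\sum_{j=2}^L (L-j+1)$ into the arithmetic progression $\sum_{m=1}^{L-1} m = \frac{(L-1)L}{2}$, so those terms contribute $\frac{1}{L^2}\cdot\frac{L(L-1)}{2} = \frac{L(L-1)}{2L^2}$. Multiplying by $(N-1)$ gives $\textbf{E}[M(t)] = \big(\frac{1}{L} + \frac{L(L-1)}{2L^2}\big)(N-1)$, and the one-line simplification $\frac{1}{L} + \frac{L-1}{2L} = \frac{L+1}{2L} = \frac{1}{2} + \frac{1}{2L}$ yields the stated closed form.

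There is no substantive obstacle here; the computation is elementary once Lemma~\ref{lemma:1} is in hand. The only points that require a little care are the reindexing of the summation when converting $\sum_{j=2}^L (L-j+1)$ into $\sum_{m=1}^{L-1} m$, and treating the $j=1$ term separately, since $\sum_{k=1}^L p_k$ is the full sum $1$ rather than a partial tail — conflating it with the generic term is the one easy slip to avoid.
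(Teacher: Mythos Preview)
Your proposal is correct and follows essentially the same approach as the paper: substitute $p_k = 1/L$ into \eqref{eq:M_gen}, separate the $j=1$ term (which contributes $1/L$), recognize the remaining terms as $\frac{1}{L^2}\sum_{m=1}^{L-1} m = \frac{L(L-1)}{2L^2}$, and simplify. The paper's own proof is line-for-line the same computation, only without your explicit reindexing notation.
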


\begin{proof}
If all channels are uniformly distributed, then $p_k=\frac{1}{L}$.
In that case, by using \eqref{eq:M_gen} $\textbf{E}[M(t)]$ is given
by,
\begin{align*}
\textbf{E}[M(t)]&=\\&\left[\frac{1}{L} + \frac{1}{L}(\sum_{k=2}^L
\frac{1}{L}) + \frac{1}{L}(\sum_{k=3}^L \frac{1}{L}) + \ldots +
\frac{1}{L^2}\right](N-1)
\end{align*}
Thus, we have,
\begin{align}
\textbf{E}[M(t)]&=\\&\left[\frac{1}{L} + \frac{1}{L^2}(L-1) +
\frac{1}{L^2}(L-2) + \ldots +
\frac{1}{L^2}\right](N-1)\label{eq:EM1}
\end{align}
When we arrange \eqref{eq:EM1}, we obtain
\begin{align*}
\textbf{E}[M(t)]=\left[\frac{1}{L} + \frac{L(L-1)}{2L^2}
\right](N-1)=\left[\frac{1}{2} + \frac{1}{2L}
\right](N-1)\label{eq:EM2}
\end{align*}
\end{proof}

\begin{theorem}
\label{thm:main} When channels are homogenous and if $N>3$ we
guarantee to expend the rate region, i.e., $\epsilon>0$ and the
amount of increase in rate region is equal to $\epsilon$ given in
\eqref{eq:epsilon}.
\end{theorem}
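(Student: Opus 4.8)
The plan is to reduce the whole statement to the single scalar inequality $\textbf{E}[M(t)]>1$, and then to extract a clean lower bound for $\textbf{E}[M(t)]$ from Lemma~\ref{lemma:1}. First I would note that, since $\beta>0$ and $1-\beta N>0$, the quantity $\epsilon$ in \eqref{eq:epsilon} is strictly positive precisely when $\textbf{E}[M(t)]-1>0$; so, in view of Theorem~\ref{theorem:1}, it suffices to show $\textbf{E}[M(t)]>1$ whenever $N\ge 4$, and the amount of increase is then automatically the $\epsilon$ of \eqref{eq:epsilon}.

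Next I would simplify \eqref{eq:M_gen}. Write $g\triangleq p_1+p_2\sum_{k=2}^L p_k+p_3\sum_{k=3}^L p_k+\cdots+p_L^2$ for the bracketed factor, so that $\textbf{E}[M(t)]=g\,(N-1)$. Using $\sum_{k=1}^L p_k=1$ I replace the leading term by $p_1=p_1\sum_{k=1}^L p_k$, which turns $g$ into the symmetric partial sum $g=\sum_{j=1}^L p_j\sum_{k=j}^L p_k=\sum_{1\le j\le k\le L}p_j p_k$. Combining this with the expansion $1=\big(\sum_k p_k\big)^2=2\sum_{j<k}p_jp_k+\sum_k p_k^2$ gives $\sum_{j<k}p_jp_k=\tfrac12\big(1-\sum_k p_k^2\big)$, hence the closed form $g=\tfrac12\big(1+\sum_{k=1}^L p_k^2\big)$. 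As a consistency check this specializes to $\tfrac12+\tfrac1{2L}$ in the uniform case, recovering Lemma~\ref{lemma:2}.

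Since every channel state occurs with strictly positive probability, $\sum_k p_k^2>0$, so $g>\tfrac12$ and therefore $\textbf{E}[M(t)]=g(N-1)>\tfrac{N-1}{2}$. For integer $N$ with $N>3$ we have $N-1\ge 3$, hence $\textbf{E}[M(t)]>\tfrac32>1$, and by the first step $\epsilon>0$, which finishes the proof.

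The argument has essentially no deep obstacle; the only point requiring care — the ``hard'' step in a purely mechanical sense — is the rewriting of \eqref{eq:M_gen}, where one must spot that absorbing $p_1$ as $p_1\sum_k p_k$ converts the nested partial-sum expression into $\sum_{j\le k}p_jp_k$, after which the square-of-a-sum identity does the rest. A term-by-term lower bound on $g$ is a cruder alternative that avoids the identity, but it is messier and yields a weaker constant.
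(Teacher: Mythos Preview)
Your proposal is correct and reaches the same final inequality $\textbf{E}[M(t)]>(N-1)/2$ as the paper, but by a genuinely different and much shorter route. The paper argues by optimisation: it computes the Hessian of $\textbf{E}[M(t)]$ with respect to $(p_2,\dots,p_L)$ to establish joint convexity (Lemma~\ref{lemma:3}), solves the first-order system to locate the minimiser at the uniform distribution (Lemma~\ref{lemma:4}), evaluates the uniform case (Lemma~\ref{lemma:2}), shows that this value is decreasing in $L$ (Lemma~\ref{lemma:5}), and finally lets $L\to\infty$ to obtain the bound $(N-1)/2$. You instead collapse the bracketed factor in \eqref{eq:M_gen} to the closed form $g=\tfrac12\bigl(1+\sum_k p_k^2\bigr)$ via the identity $\sum_{j\le k}p_jp_k=\tfrac12\bigl(1+\sum_k p_k^2\bigr)$, and then the positivity of $\sum_k p_k^2$ gives $g>\tfrac12$ immediately. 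Your approach buys considerable economy and actually subsumes the paper's intermediate lemmas: since $\sum_k p_k^2\ge 1/L$ with equality at the uniform law, your closed form recovers Lemmas~\ref{lemma:2}--\ref{lemma:5} in one line. The paper's approach, while longer, has the mild pedagogical advantage of making explicit that the uniform distribution is the extremal case.
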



\begin{proof}
The proof of the theorem can be done in two parts. In the first
part, we show that $\textbf{E}[M(t)]$ is minimum when channels are
uniform. In the second part of the proof, we show that if $N>3$,
then $\epsilon>0$ when channels are uniform. We begin  by proving
the first part of the proof.

We first show that $\textbf{E}[M(t)]$ is jointly convex function of
$p_1,p_2,\ldots,p_L$ and then, the minimum of this convex function
is achieved when channels are uniform.

\begin{lemma}
\label{lemma:3} $\textbf{E}[M(t)]$ is jointly convex function of
$p_1,p_2,\ldots,p_L$.
\end{lemma}

\begin{proof}
By using $p_1=1-\sum_{k=2}^L p_k$, the Hessian of $\textbf{E}[M(t)]$
in \eqref{eq:M_gen} can be given as follows,
\begin{align*}
H= \left(\begin{array}{c c c c c c c}
 2 & 1 & 1 & 1 & 1 & \cdots & 1 \\
 1 & 2 & 1 & 1 & 1 & \cdots & 1 \\
 1 & 1 & 2 & 1 & 1 & \cdots & 1 \\
 1 & 1 & 1 & 2 & 1 & \cdots & 1 \\
 1 & 1 & 1 & 1 & 2 & \cdots & 1 \\
 \vdots  & \vdots & \vdots  & \vdots & \vdots  & \ddots & \vdots  \\
 1 & 1 & 1 & 1 & 1 & \cdots & 2 \\
  \end{array} \right)(N-1)
\end{align*}
Now, we show that $H$ is positive definite matrix. Let
$\textbf{x}=[x_1 \ x_2 \ x_3 \ldots \ x_L]$ be any vector and
$\textbf{x} \in \mathbb{R}^{L-1}$. If $\textbf{x}^T H \textbf{x} >0
$ then, $H$ is positive definite matrix and $\textbf{E}[M(t)]$ is
convex function of $p_1,p_2,\ldots,p_L$ \cite{Boyd:Convex04}.
\begin{equation}
\textbf{x}^T H \textbf{x}= \left[\sum_{l=1} x_l^2 + \left(\sum_{l=1}
x_l\right)^2\right](N-1)
> 0.
\end{equation}
Hence, $\textbf{E}[M(t)]$ is jointly convex function of
$p_1,p_2,\ldots,p_L$.
\end{proof}

\begin{lemma}
\label{lemma:4} $\textbf{E}[M(t)]$ has the minimum value when
channels are uniformly distributed.
\end{lemma}
\begin{proof}
We already showed that $\textbf{E}[M(t)]$ is jointly convex function
of $p_1,p_2,\ldots,p_L$. Hence, by using $p_1=1-\sum_{k=2}^L p_k$ we
have the following $L-1$ linear equations,
\begin{align}
\frac{\delta \textbf{E}[M(t)]}{\delta p_2}&=-1 + 2p_2 + (p_3 + p_4 + \ldots + p_L)=0\\
\frac{\delta \textbf{E}[M(t)]}{\delta p_3}&=-1 + 2p_3 + (p_2 + p_4 + \ldots + p_L)=0\\
\frac{\delta \textbf{E}[M(t)]}{\delta p_4}&=-1 + 2p_4 + (p_2 + p_3 + p_5 + \ldots + p_{L})=0\\
\vdots\\
\frac{\delta \textbf{E}[M(t)]}{\delta p_L}&=-1 + 2p_{L} + (p_2 + p_3 + p_5 + \ldots + p_{L-1})=0\\
\end{align}
with matrix notation,

\begin{align}
\left(\begin{array}{c c c c c c c}
 2 & 1 & 1 & 1 & 1 & \cdots & 1 \\
 1 & 2 & 1 & 1 & 1 & \cdots & 1 \\
 1 & 1 & 2 & 1 & 1 & \cdots & 1 \\
 1 & 1 & 1 & 2 & 1 & \cdots & 1 \\
 1 & 1 & 1 & 1 & 2 & \cdots & 1 \\
 \vdots  & \vdots & \vdots  & \vdots & \vdots  & \ddots & \vdots  \\
 1 & 1 & 1 & 1 & 1 & \cdots & 2 \\
  \end{array} \right)\left(\begin{array}{c}
 p_2\\
p_3\\
 p_4 \\
  p_5 \\
\vdots  \\
p_L\\
  \end{array} \right)=\left(\begin{array}{c}
1\\
1\\
1\\
1\\
\vdots  \\
1\\
  \end{array} \right)
\end{align}
Solving this linear system, we have,
\begin{equation}
p_k-p_l=0, \forall k,l \ k\neq l
\end{equation}
Hence,
\begin{equation}
p_k=p_l, \forall k,l \ k\neq l
\end{equation}
Thus,
\begin{equation}
p_k=\frac{1}{L}, \forall k.
\end{equation}
Thus, when the channel distributions are uniform, $\textbf{E}[M(t)]$
has the minimum value.
\end{proof}

We now prove the second part of the theorem. We  show that when
channels are uniform, $\textbf{E}[M(t)]$ is a decreasing function of
$L$.

\begin{lemma}
\label{lemma:5} $\textbf{E}[M(t)]$ is a decreasing function of $L$.
\end{lemma}
\begin{proof}
From \eqref{eq:EM2},
\begin{equation}
\textbf{E}[M(t)]=\left[\frac{1}{2} + \frac{1}{2L} \right](N-1)
\end{equation}
Taking the derivative of $\textbf{E}[M(t)]$ with respect to $L$
yields that,
\begin{equation}
\frac {d \textbf{E}[M(t)]}{dL}=\left[\frac{-1}{2L^2} \right](N-1)
<0.
\end{equation}
Thus, $\textbf{E}[M(t)]$ is a decreasing function of $L$.
\end{proof}

Now, it is easy to see that in \eqref{eq:EM2}, taking $L \rightarrow
\infty$ yields that
\begin{equation}
\lim_{L \rightarrow\infty} \textbf{E}[M(t)]=\lim_{L
\rightarrow\infty} \left[\frac{1}{2} +
\frac{1}{2L}\right](N-1)=\frac{N-1}{2}.
\end{equation}
In the limiting case, when $N>3$, $\textbf{E}[M(t)]>1$. In addition,
according to Lemma \ref{lemma:5} if $L$ is finite,
$\textbf{E}[M(t)]$ is still greater than 1 whenever $N>3$ since
$\textbf{E}[M(t)]$ decreases as $L$ increases. We can conclude that
when all channels are uniformly distributed, and when $N>3$, then
$\textbf{E}[M(t)]>1$. As a result, we guarantee to expend the rate
region, i.e., $\epsilon>0$. In addition, according to Lemma
\ref{lemma:3} and Lemma \ref{lemma:4}, $\textbf{E}[M(t)]$ has its
minimum value when channels are uniform. Therefore, for homogenous
channels, when $N>3$, then $\epsilon>0$ and rate region is expended.
This completes the proof.
\end{proof}

\section{Performance of SDF with Non-homogenous channel}
When the channels are not identical, then $p_{nk}\neq p_{mk}$, where
$n \neq m$ and $\forall k$. In addition, for homogenous channels,
the following equality holds,
\begin{align}
\textbf{E}[M(t)| R_{i^*}(t)=r_k, n=i^*]=\textbf{E}[M(t)|
R_{i^*}(t)=r_k],
\end{align}
However, this condition cannot be hold when channels are
heterogenous.  Hence, $\textbf{E}[M(t)]$ cannot be determined in a
similar way  of homogenous channels. We have the following results
for heterogenous channels.

\begin{theorem}
\label{thm:main_het} When channels are non-homogenous and  as $L$
goes infinity, i.e., $L\rightarrow \infty$, or $N$ is sufficiently
large, then $\epsilon > 0$ i.e., we guarantee to expand the rate
region.
\end{theorem}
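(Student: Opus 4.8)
The plan is to reduce the claim, exactly as in the homogeneous case, to the single inequality $\textbf{E}[M(t)]>1$: by Theorem~\ref{theorem:1}, $\epsilon=\beta(\textbf{E}[M(t)]-1)/(1-\beta N)$, and since $\beta>0$ and the denominator $1-\beta N$ is positive, this is positive iff $\textbf{E}[M(t)]>1$. The first step is to obtain a usable expression for $\textbf{E}[M(t)]$ when the channels are heterogeneous. I would condition jointly on the identity $n$ of the longest-queue user $i^*$ and on its current channel state $R_n(t)$. The event $\{i^*=n\}$ is a function of $\boldsymbol Q(t)$, which depends only on channel realizations in slots $0,\dots,t-1$; since channels are i.i.d.\ across slots and independent across users, $R_n(t)$ and all $\{R_j(t)\}_{j\neq n}$ are independent of $\{i^*=n\}$ and mutually independent. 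Hence $\textbf{E}[M(t)\mid i^*=n,\,R_n(t)=r_k]=\sum_{j\neq n}\Pr[R_j(t)\le r_k]$, and averaging over $R_n(t)$ gives
\[
\textbf{E}[M(t)]=\sum_{n}\Pr[i^*=n]\,h_n,\qquad h_n\triangleq\sum_{j\neq n}\Pr[R_j(t)\le R_n(t)],
\]
where inside $h_n$ the two channel variables are independent with respective laws $(p^n_k)_k$ and $(p^j_k)_k$. This is the heterogeneous analogue of Lemma~\ref{lemma:1}, except that $\textbf{E}[M(t)]$ is now a queue-weighted average of the quantities $h_n$, which in general differ from user to user.

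For the ``$N$ sufficiently large'' regime I would use a crude but robust bound: for every realization $R_n(t)=r_k$ we have $\Pr[R_j(t)\le r_k]\ge\Pr[R_j(t)=r_L]=p^j_L>p^{min}$, because $r_L$ is the smallest rate, so $\{R_j(t)\le r_k\}\supseteq\{R_j(t)=r_L\}$. Therefore $h_n>(N-1)p^{min}$ for every $n$, hence $\textbf{E}[M(t)]>(N-1)p^{min}$, which exceeds $1$ as soon as $N>1+1/p^{min}$. Substituting into \eqref{eq:epsilon} yields $\epsilon>0$ for all sufficiently large $N$ (with $L$ fixed).

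For the ``$L\to\infty$'' regime the mechanism is that ties become negligible. Under the (implicit) reading that enlarging $L$ refines the rate grid so that the channel laws develop no persistent atom, $\max_k p^n_k\to 0$, hence $\Pr[R_j(t)=R_n(t)]=\sum_k p^n_k p^j_k\to 0$ for every pair. Then for each pair $\Pr[R_j\le R_n]+\Pr[R_n\le R_j]=1+\Pr[R_j=R_n]\to 1$, and summing over unordered pairs $\{n,j\}$ gives $\sum_n h_n\to\binom{N}{2}$; i.e.\ the \emph{unweighted} mean of the $h_n$ tends to $(N-1)/2$, which matches the homogeneous-uniform limit of Lemma~\ref{lemma:2} and exceeds $1$ precisely when $N>3$. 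The remaining task is to promote this statement about the unweighted mean to the queue-weighted sum $\sum_n\Pr[i^*=n]h_n$.

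That promotion is the main obstacle. In the homogeneous analysis the identity of $i^*$ was irrelevant---all $h_n$ coincide---which is exactly what made the convexity/minimization argument behind Theorem~\ref{thm:main} possible; with heterogeneous channels $\textbf{E}[M(t)]$ is a convex combination whose weights $\Pr[i^*=n]$ are themselves governed by the full coupled queueing dynamics (and under Max-Weight are biased toward users with poorer channels, for whom $h_n$ is smaller). I expect the bulk of the work to lie in controlling this coupling: the cleanest route is to establish a user-uniform lower bound $h_n\ge c(N,L)$ with $c(N,L)$ tending to a constant $>1$ under the stated limits---which amounts to ruling out the degenerate case in which the stochastically worst user both dominates the queue-length distribution and suppresses almost no one---after which $\textbf{E}[M(t)]\ge c(N,L)>1$ is immediate, and with it $\epsilon>0$ via \eqref{eq:epsilon}.
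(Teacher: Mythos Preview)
Your decomposition $\textbf{E}[M(t)]=\sum_n\Pr[i^*=n]\,h_n$ with $h_n=\sum_{j\neq n}\Pr[R_j\le R_n]$ is exactly the paper's starting point, and your large-$N$ argument is correct: the bound $h_n>(N-1)p^{min}$ is already uniform in $n$, so the convex combination over $\Pr[i^*=n]$ is controlled automatically and the queue-weighting obstacle you worry about does not arise for this regime. The paper carries out the same estimate but less cleanly: it additionally invokes a lower bound $\Pr[i^*=n]\ge p^{min}_q$, producing an extra prefactor $Np^{min}_q\le1$ that only weakens the inequality.

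For the $L\to\infty$ part, your route through the \emph{unweighted} average of the $h_n$ does stall, as you acknowledge, and the paper avoids this obstacle rather than overcoming it. The device you are missing is simply to sharpen your pointwise estimate $\Pr[R_j\le r_k]\ge p^{min}$ to $\Pr[R_j\le r_k]\ge(L-k+1)\,p^{min}$, counting all states no better than $r_k$ instead of only the worst one. Averaging over $R_n=r_k$ with $p_{nk}\ge p^{min}$ then gives the uniform-in-$n$ bound
\[
h_n\ \ge\ (N-1)\Bigl(p^{min}+(p^{min})^2\,\tfrac{L(L-1)}{2}\Bigr),
\]
which, being independent of $n$, passes through the queue weights immediately. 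From this the paper reads off the sufficient condition $N(N-1)>\bigl[p^{min}_q\bigl(p^{min}+(p^{min})^2L(L-1)/2\bigr)\bigr]^{-1}$ and declares it met as $L\to\infty$. You should note, however, that this last step treats $p^{min}$ as fixed while $L\to\infty$, whereas $\sum_k p_k^n=1$ forces $p^{min}\le 1/L$; so the paper's $L\to\infty$ conclusion is itself only formal, much like your ``ties become negligible'' heuristic needs the extra hypothesis $\max_k p^n_k\to0$. The substantive point is that a uniform-in-$n$ lower bound on $h_n$, not an average, is what eliminates the dependence on the queue dynamics.
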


\begin{proof}
 the proof is provided in Appendix \ref{sec:main_het}.
\end{proof}


\section{Numerical Results}
\label{sec:sim} In our simulations, we model a single cell CDMA
downlink transmission utilizing high data rate (HDR) \cite{HDR1}.
The base station serves 20 users and keeps a separate queue for each
user. Time is slotted with length $T_s=5$ ms. Packets arrive at each
slot according to Poisson distribution for each users with mean
$\lambda_n$. The size of a packet is set to 128 bytes which
corresponds to the size of an HDR packet. Each channel has 5
possible states with rates as given in Table \ref{tab:table1}.
\begin{table}[h!]
\renewcommand{\arraystretch}{1.5}
\begin{center}
    \begin{tabular}{ | l | l | l | l | l | l |}
    \hline
    Rates & $r_1$ & $r_2$ & $r_3$ & $r_4$ & $r_5$ \\ \hline
    kb/s & 1843.2 & 1228.8 & 614.4 & 307.2 & 76.8   \\
    \hline
    \end{tabular}
    \caption{Possible Physical Rates}
    \label{tab:table1}
\end{center}
\end{table}

\subsection{Homogenous Channels}
First, we evaluate the performance of SDF algorithm when channels
are homogenous and both uniform and non-uniform distributions.
\subsubsection{Uniform Channels}
In uniform case, $p_n=0.2$ for all users since there are 5 channel
states and the channel state distributions are identical.
Figure~\ref{fig:fig1} depicts the average total queue sizes in terms
of packets vs. the overall arrival rate when $\beta=0.01$ and
$\beta=0.02$. The maximum supportable arrival rate is achieved in
\textit{hypothetical} case where the probing cost is zero, i.e.,
$\beta=0$ and the minimum supportable rate is achieved when full CSI
is obtained and $\beta=0.02$. When $\beta=0.01$ and SDF algorithm is
applied, we are closest to the hypothetical rate region. Clearly,
SDF algorithm outperforms the full CSI Max-Weight algorithm in terms
of rate  region for both $\beta=0.01$ and $\beta=0.02$.
\begin{figure}[t]
     \centering
     \includegraphics[width=1.07\columnwidth]{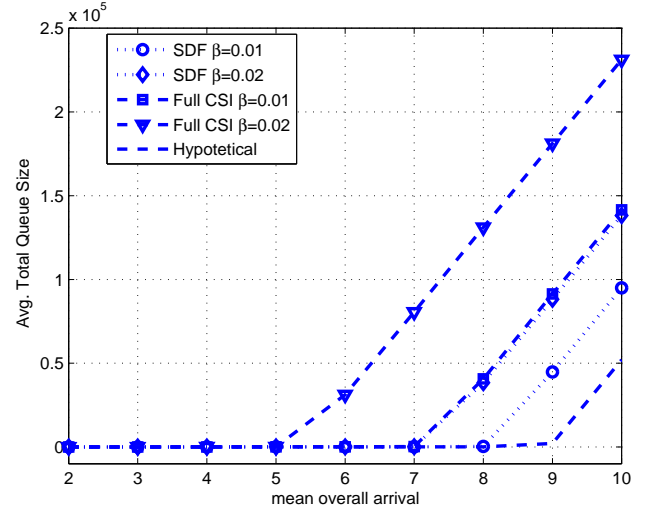}
     \caption{Performance of SDf algorithm with Homogenous and Uniform channels.}
     \label{fig:fig1}
\end{figure}

\subsubsection{Non-Uniform Channels}
Here, we investigate the performance of SDf algorithm when channels
are identical but the channel state distributions are not uniform.
In this case, the channel state distribution is given in Table
\ref{tab:table2}:
\begin{table}[h!]
\renewcommand{\arraystretch}{1.5}
\begin{center}
    \begin{tabular}{ | l | l | l | l | l | l |}
    \hline
     & $p_1$ & $p_2$ & $p_3$ & $p_4$ & $p_5$ \\ \hline
    probabilities & 0.3 & 0.3 & 0.2 & 0.1 & 0.1   \\
    \hline
    \end{tabular}
    \caption{Channel state distribution}
    \label{tab:table2}
\end{center}
\end{table}

Figure~\ref{fig:fig2} depicts the average total queue sizes in terms
of packets vs. the overall arrival rate when channel state
distributions are uniform and as given in Table \ref{tab:table2}
which is defined as non-uniform and  $\beta=0.02$. The maximum
supportable arrival rate is achieved in hypothetical case the
minimum supportable rate is achieved when full CSI is obtained. It
is easy to see that the maximum supportable rate achieved by SDF
algorithm when both uniform and non-uniform channels are the same.
\begin{figure}[t]
     \centering
     \includegraphics[width=1.07\columnwidth]{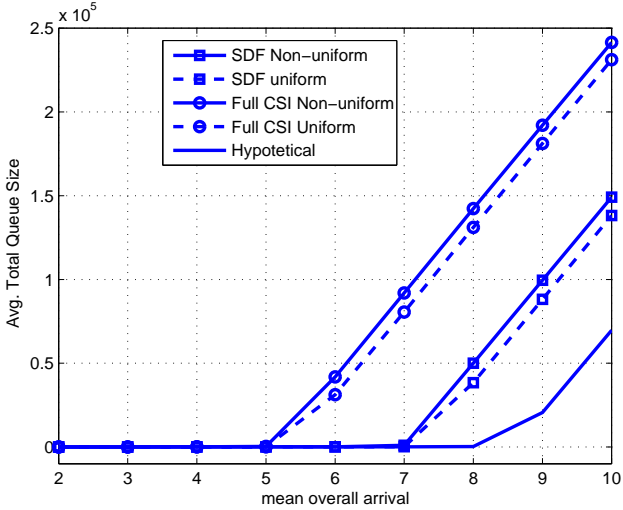}
     \caption{Performance of SDf algorithm with Homogenous and Non-Uniform channels.}
     \label{fig:fig2}
\end{figure}

\subsection{Heterogenous Channels}
Here, we investigate the performance of SDF algorithm when channels
are neither homogenous nor uniform. For the heterogeneous case, we
divide the users into four groups where there are 5 users in each
gruop. The channel state distributions are given for each group in
Table \ref{tab:table3}.
\begin{table}[t]
\renewcommand{\arraystretch}{1.5}
\begin{center}
    \begin{tabular}{ | l | l | l | l | l | l |}
    \hline
     & $p_1$ & $p_2$ & $p_3$ & $p_4$ & $p_5$ \\ \hline
    Group 1 & 0.3 & 0.3 & 0.2 & 0.1 & 0.1   \\ \hline
    Group 2 & 0.1 & 0.1 & 0.2 & 0.3 & 0.3  \\ \hline
    Group 3 & 0.25 & 0.15 & 0.1 & 0.25 & 0.15  \\ \hline
    Group 4 & 0.15 & 0.25 & 0.25 & 0.1 & 0.25  \\
    \hline
    \end{tabular}
    \caption{Channel state distribution with heterogenous channels}
    \label{tab:table3}
\end{center}
\end{table}

\begin{figure}[t]
     \centering
     \includegraphics[width=1.07\columnwidth]{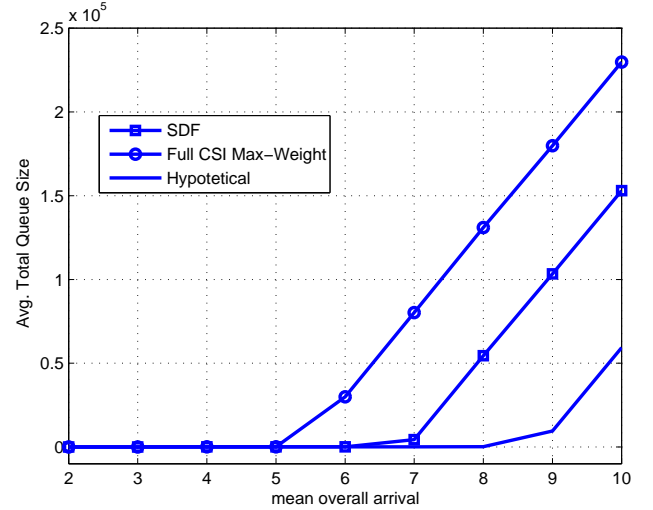}
     \caption{Performance of SDf algorithm with Heterogenous channels.}
     \label{fig:fig3}
\end{figure}
Figure~\ref{fig:fig3} depicts the average total queue sizes in terms
of packets vs. the overall arrival rate when channel state
distributions are given as in Table \ref{tab:table3} and
$\beta=0.02$. The maximum supportable arrival rate still is achieved
in hypothetical case whereas the minimum supportable rate is
achieved when full CSI is obtained. It is easy to see that the
maximum supportable rate achieved by SDF algorithm when uniform,
non-uniform and heterogenous channels are the same.

\subsubsection{Non-Stationary Channels}
Next, we demonstrate the performance of SDF algorithm over
time-correlated and non-stationary channels. The channel between the
BS and each user is modeled as a \textit{correlated} Rayleigh fading
according to Jakes' model with different Doppler frequencies varying
randomly between 5 Hz and 15 Hz. We set $\textrm{BW}=1.25$ MHz and
$\textrm{SNR}=10$ dB.  Let $H_n(t)$ denote CSI of user $n$ at time
slot $t$. $H_n(t)$ is a random process which does not have a
stationary probability distribution, i.e, the mean of the channel
gain changes over time. Let $h_n(t)$ represent the realization of
$H_n(t)$ at time $t$, $n\in \{1,2,\dots,N\}$. Then, the maximum
number of bits of a user may transmit is given as,
\begin{align}
R_n(t)=T_s(1-\beta N_p(t))\textrm{BW}\log_2\left(1 +
\textrm{SNR}\times h_n(t)\right)
\end{align}
where $BW$ is the bandwidth of a channel. Similar to the previous
scenario, Figure~\ref{fig:nonsta} depicts the average total queue
sizes in terms of packets vs. the overall arrival rate when the
channels are non-stationary and $\beta=0.02$.  The maximum and the
minimum arrival rates while keeping the queues stabile are achieved
in hypothetical and full CSI cases, respectively. As the overall
arrival rate exceeds 14 packets/slot queue sizes suddenly increase
with full CSI Max-Weight and the network becomes unstable. However,
SDF  improves over full CSI Max-Weight by supporting the overall
arrival rate of up to 16 packets/slot. In other words, SDF algorithm
is able to sustain 14\% more traffic than the full CSI Max-Weight
algorithm. Thus, SDF appears to stabilize a larger range of arrival
rates.
\begin{figure}[t]
     \centering
     \includegraphics[width=1.07\columnwidth]{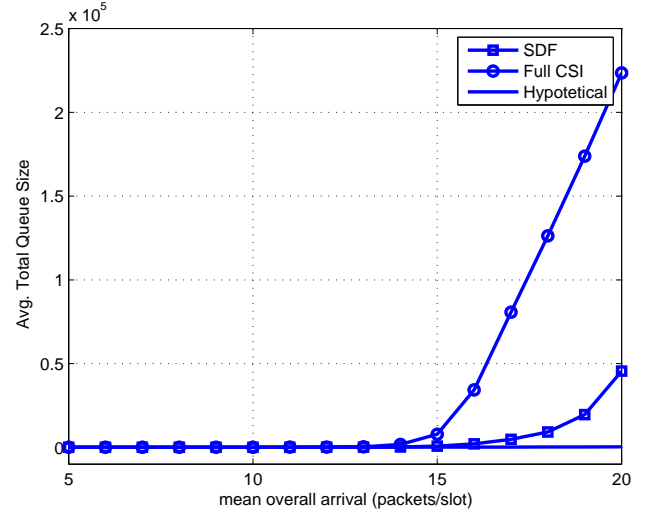}
     \caption{Performance of SDf algorithm over non-stationary channels.}
     \label{fig:nonsta}
\end{figure}
\section{Conclusion}
\label{sec:conclusion} In this paper, we have considered the joint
scheduling and channel probing problem in a single channel wireless
downlink network. We have developed a low complex and dynamic
feedback algorithm named SDF. We have shown that SDF algorithm can
support $1+\epsilon$ fraction of full achievable rate region. Then,
we have proved the sufficient condition for $\epsilon>0$. Our
simulation results demonstrated a significant performance gain of
the proposed algorithm compared to the case when the full CSI is
obtained. In this extended abstract, the analytical and simulation
results are given by assuming a single channel wireless network. In
the full version of the paper, we will provide the result by
considering the multichannel wireless system, i.e., OFDM networks.

\bibliographystyle{IEEEtran}
\bibliography{IEEEabrv,ref}

\appendices
\section{Proof of Theorem \ref{thm:main_het}}
\label{sec:main_het}
\subsection{SDF with Non-homogenous channel}
When the channels are not identical, i.e., $p_{nk}\neq p_{mk}$,
where $n \neq m$ and $\forall k$. We define the $\chi_n$  such that
$n=i^*$, i.e., user $n$ is the user with maximum queue size at a
time. In addition, we define $\varphi_k$ such that $R_{i^*}(t)=r_k$,
i.e., the user with maximum queue sizer is at channel state $k$ at
time $t$. Hence,
\begin{align*}
&\Pr[\chi_n]=\Pr[n=i^*],\\
&\Pr[\varphi_k]=\Pr[R_{i^*}(t)=r_k].
\end{align*}
Then $\textbf{E}[M(t)]$ can be found as follows:
\begin{equation}
\textbf{E} [M(t)]=\sum_{n=1}^N \sum_{k=1}^L \textbf{E}[M(t)
|\chi_n,\varphi_k ]Pr[\chi_n, \varphi_k].
\end{equation}
Hence,
\begin{align*}
&\textbf{E}[M(t)]=\\&\Pr[\chi_1]\left(p_{11}(N-1) +
p_{12}(\sum_{n=2}^N \sum_{k=2}^L p_{nk}) + \ldots +
p_{1L}(\sum_{n=2}^N  p_{nL})\right)\notag\\+
&\Pr[\chi_2]\left(p_{21}(N-1) + p_{22}(\sum_{\substack{n=1\\ n \neq
2}}^N \sum_{k=2}^L p_{nk}) +\ldots + p_{2L}(\sum_{\substack{n=1\\ n
\neq 2}}^N p_{nL})\right)\notag\\+ &Pr[\chi_3]\left(p_{31}(N-1) +
p_{32}(\sum_{\substack{n=1\\ n \neq 3}}^N \sum_{k=2}^L p_{nk})
+\ldots + p_{3L}(\sum_{\substack{n=1\\ n \neq 3}}^N
p_{nL})\right)\notag\\
 \vdots \\ +
&\Pr[\chi_N]\left(p_{N1}(N-1) + p_{N2}(\sum_{\substack{n=1\\
n \neq N}}^N \sum_{k=2}^L p_{nk}) + \ldots + p_{NL}(\sum_{\substack{n=1\\
n \neq N}}^N p_{nL})\right).
\end{align*}
Note that $\Pr[\chi_n] \geq p^{min}_q$ for all $n$, where $0 <
p^{min}_q <1$. Hence, a lower bound on $\textbf{E}[M(t)]$ can be
given as follow,
\begin{align*}
\textbf{E}[M(t)] \geq  Np^{min}_q [&p^{min}(N-1)+ p^{min}(p^{min}
(L-1)(N-1)\\& + p^{min}(p^{min} (L-2)(N-1) + \ldots \\&+
p^{min}(p^{min} (N-1) ]
\end{align*}
By rearranging, we have,
\begin{align*}
\textbf{E}[M(t)] \geq Np^{min}_q \left[ (N-1) \left(p^{min} +
(p^{min})^2 \left[\frac{L(L-1)}{2}\right]\right) \right]
\end{align*}
Therefore,  we can guarantee to achieve a larger capacity region
when the following condition is satisfied:
\begin{align*}
N(N-1) > \frac{1}{p^{min}_q \left[\left(p^{min} + (p^{min})^2
\left[\frac{L(L-1)}{2}\right]\right) \right]}
\end{align*}
Clearly, $L\rightarrow \infty$ or $N$ is large enough,  this
condition holds and $\epsilon > 0$. This completes the proof.
\end{document}